\newcommand{\indices}[2]{{
  \begin{array}{@{}r@{}}
    \scriptstyle #2~\smash{\eqmakebox[ind]{$\scriptstyle\rightarrow$}} \\[-\jot]  
    \scriptstyle #1~\smash{\eqmakebox[ind]{$\scriptstyle\downarrow$}}
  \end{array}}}
\newtheorem{theorem}{Theorem}[section]
\newtheorem{lemma}[theorem]{Lemma}
\begin{document}
\title{Camouflage Adversarial Attacks on Multiple Agent Systems} 


\author{%
  \IEEEauthorblockN{Ziqing Lu}
  \IEEEauthorblockA{
                    University of Iowa\\
                    Iowa City, IA, USA\\
                    Email: ziqlu@uiowa.edu}
  \and
  \IEEEauthorblockN{Guanlin Liu}
  \IEEEauthorblockA{
                    University of California, Davis\\
                    Davis, CA, USA\\
                    Email: glnliu@ucdavis.edu}
                    \and\IEEEauthorblockN{Lifeng Lai}
  \IEEEauthorblockA{
                    University of California, Davis\\
                    Davis, CA, USA\\
                    Email: lflai@ucavis.edu}
                    \and
  \IEEEauthorblockN{Weiyu Xu}
  \IEEEauthorblockA{
  University of Iowa\\
                    Iowa City, IA, USA\\
                    Email:weiyu-xu@uiowa.edu}
}


\maketitle


\begin{abstract}
The multi-agent reinforcement learning systems (MARL) based on the Markov decision process (MDP) have emerged in many critical applications. To improve the robustness/defense of MARL systems against adversarial attacks, the study of various adversarial attacks on reinforcement learning systems is very important.  Previous works on adversarial attacks considered some possible features to attack in MDP, such as the action poisoning attacks, the reward poisoning attacks, and the state perception attacks. In this paper, we propose a brand-new form of attack called the camouflage attack in the MARL systems. In the camouflage attack, the attackers change the appearances of some objects without changing the actual objects themselves; and the camouflaged appearances may look the same to all the targeted recipient (victim) agents. The camouflaged appearances can mislead the recipient agents to misguided actions. We design algorithms that give the optimal camouflage attacks minimizing the rewards of recipient agents. Our numerical and theoretical results show that camouflage attacks can rival the more conventional, but likely more difficult state perception attacks. We also investigate cost-constrained camouflage attacks and showed numerically how cost budgets affect the attack performance.
\end{abstract}

\section{Introduction}
Single-agent and multiple-agent reinforcement learning (RL) algorithms are used in many safety or security related applications, such as autonomous driving\cite{ad}, financial decisions\cite{fd}, recommendation systems\cite{rs}, and also in drones' and robots' algorithms \cite{drones}. It is thus essential to develop trustworthy systems before their real-world deployment. Studying the potential adversarial attacks on RL systems and evaluating the worst-case performances of RL agents under these attacks can help us limit the damage imposed by adversarial parties, defend against adversarial attacks, and therefore build more robust and secure RL systems.

Adversarial attacks and defenses against these attacks for single-agent RL systems have been relatively well studied so far\cite{davis-sa, enmy, enchat, unlearner, vul-mech, defen1,defen2,defen3,defen4}, but adversarial attacks on multi-agent learning are still not well understood. In MARL, the model can still be based on the Markov Decision Process (MDP), but multiple players are playing in the Markov game (MG), interacting with the environment, and the environment dynamics change by the joint action of all agents. The increasing complexity of settings potentially makes MARL systems more fragile or makes it harder to analyze their robustness. New methods are introduced especially for improving/evaluating the performance of MARL systems, and evaluating worst-case adversarial attacks on MARL systems\cite{vlearning, davis-ma, multiql}. For example, \cite{vlearning} proposed a decentralized algorithm: V-learning that only scales with the maximum number of actions of one agent. In \cite{davis-ma}, the authors used reward loss and cost functions to evaluate the efficacy of adversarial attacks on MARL systems.

In terms of the types of adversarial attacks on MARL, most proposed adversarial attacks only consider recipient (victim) agents' properties to attack, for example, the action poisoning attacks, the reward poisoning attacks, the state poisoning attacks, the environmental attacks, or the mixed attacks  \cite{att-costsig, davis-ma, ada-rpa,davis-sa, poli-induc, batch-rl, spaold, envattack1,envattack2}. These attacks either directly change the features of agents, i.e., actions, rewards, or states of the MDP, or perturb the interactions between the agents' actions and the environments. In \cite{spa1, spa2}, the authors proposed a form of state perception (observation) attack in deep reinforcement learning, in which attackers confuse agents with delusional states instead of changing their actual states during the game. In \cite{cc-ma}, the authors addressed the state perception attacks with cost constraints in a multi-agent system.

In this paper, we propose a new form of adversarial attack on MARL system: the camouflage attack. During the camouflage attack, instead of directly changing recipients' properties, the attackers change the appearances of some objects they can control or even the appearances of attackers themselves. After the camouflage attack, all the recipient agents potentially observe the same camouflaged objects' features so that they are misled to misguided decisions in the MG. The camouflage attacks are different from the state perception attacks in two ways: 1) the camouflage attack does not directly change the measurements of each recipient agents, but instead change the appearances of the objects the attackers can control thus changing the measurements of the victim agents indirectly; 2) in the camouflage attack, the perceptions of different recipient agents cannot be freely manipulated as in state perception attacks: the confusions of the recipient agents come from observing the same camouflaged objects and thus are the same or correlated.  In addition, in camouflage attacks, the underlying true states of the camouflaged objects are not changed, and what are changed are only the appearances of the camouflaged objects. For example, camouflaged robot examples include stealthy invisible airplanes that can evade the detection of regular radars: they are actually in the air but are ``camouflaged'' to be invisible. As another example, one can camouflage unimportant objects into fake ``valuable'' targets so that enemy robots spend precious resources on attacking these fake targets. 

We design between-step dynamic programming to achieve the optimal camouflage attack in multiple agent systems. Our numerical results showed that the camouflage attack can significantly reduce the total reward gained by the recipient agents. We also numerically showed similar or comparable performances between the new, cheaper camouflage attack and the conventional but likely more difficult-to-achieve state perception attack. We proved that, under certain conditions, the camouflage attack can achieve a similar goal as the state perception attack.

There have been only very limited ideas of camouflage attacks studied from the perspective of dynamic systems, except for \cite{cam-ds} which discussed essentially state perception attack for single-victim-agent dynamic systems even though the terminology ``camouflage'' is used. For non-dynamic systems, some works discussed improving the detection of camouflaged attacks in deep learning models \cite{cam-dl, cam-dl2}. 

The rest of the paper is organized as follows. In Section II, we introduce the camouflage attack model. In Section III, we consider the budget-constrained scenarios with the camouflage attacks. In Section III, we analyze the performance of camouflage attacks and compare that with the performance of state perception attacks. In section IV, we numerically evaluate the performance of the proposed attack algorithms.

\section{Problem Formulation}
In the considered MDP environment, all the agents are divided into two opposite groups, the attacker group $M$, and the recipient agent group $N$, with $|M| = m, \ |N| = n.$ 

The MDP environment for the recipient agents can be described by the 5-element tuple: $(\{\mathcal{S}_i\}_{i=1}^n, \{\mathcal{A}_i\}_{i=1}^n,$ $T, P, \{R_i\}_{i=1}^n)$, where $\mathcal{S}_i$ is the state space of the $i$-th recipient agent with $|\mathcal{S}_i|=S_i$, $\mathcal{A}_i$ is the action space for the $i$-th recipient agent with $|\mathcal{A}_i| = A_i$, $T$ is the number of time steps in MARL and $T$ is finite. Our time index $t$ starts from 0 to $T$. We refer to time step $t$ as the time interval starting from time index $t-1$ and ending at time index $t$, where $1\leq t \leq T$. We let $P_{i,t}:\mathcal{S}_i \times \mathcal{A}_i \times \mathcal{S}_i \rightarrow [0,1]$ be the transitional probability of the $i$-th recipient agent at time index $t$, and ${R}_{i,t}$ represents the reward function of the $i$-th recipient agent at time index $t$. We let $\textbf{a}_t := (a_1, a_2, \dots, a_n)$ denote the {joint action} of all the $n$ recipient agents at time step $t$, and let $\textbf{s}_t := (s_1, s_2, \dots, s_n)$ denote the joint state of all the $n$ recipient agents at a time index 
$t$. We define the optimal policy of the $i$-th recipient agent at time index  $t$ as $\pi_{i,t}^{\star}(s_{i,t}) = a_{i,t}^{\star}$. For simplicity, we assume that all the recipient agents share the same state space and the same action space. We assume that the optimal policy $\pi_{i}^{\star}(s_{i,t})$ is the same for every agent $i$, denoted as $\pi_t^{\star}(s_{t})$. 

 We assume both the attack and recipient agent groups can monitor the underlying MARL algorithms of the recipient agents, and therefore both groups know the optimal policies $\pi^{\star}_{i,t}$ of every recipient agent $i$ at time step $t$. However, the recipient agents are unaware of the existence of attackers or their attacks. The $m$ attackers perform state {perception} attacks by disturbing recipient agents' observations of their true states. For a recipient agent $i$ at time index $t$,  we let $s_{a,t,i}$ denote the true state the agent $i$ is actually in and let $s_{d,t,i}$ denote the delusional state that the agent $i$ thinks it is in.

The recipient agents are selfish in the game, aiming to maximize their own rewards obtained during the $T$ time steps. The attackers instead aim to minimize the total expected rewards of all the recipient agents during the $T$ time steps. 

There are two phases of play during one time step $t$. In the first phase, from time index $t-1$ to $t-0.5$, the attackers attack to make each recipient agent $i$ ($1\leq  i \leq n$) think it is in a delusional state ${s}_{d,t-0.5, i}$. In the 2nd phase, after the attack, from the time index $t-0.5$ to $t$, each recipient agent $i$ moves to $s_{a,t,i}$ according to its optimal policy $\pi_{i,t-0.5}^{\star}(s_{d,t-0.5,i})=\pi_{i,t-1}^{\star}(s_{d,t-0.5,i})$, in which $s_{d,t-0.5,i}$ is agent $i$'s delusional state at time index $t-0.5$, and obtains its corresponding reward $R_{i,t}(s_{a,t-1,i}, s_{a,t,i})$.

We are interested in finding the optimal attack strategy of attackers for each time step $t$ ($1\leq t \leq T$).

\textbf{Camouflage attack}: 
The $m$ attackers can change the appearances of some objects that they control during the game at every time step $t$. Mathematically, suppose that we have a random variable $X_{t-0.5}$ which represents the true status of an object at time index $(t-0.5)$. If it is not camouflaged, the appearance, denoted by $Y_{t-0.5}$, of this object is just equal to $X_{t-0.5}$, namely $Y_{t-0.5}=X_{t-0.5}$. {The camouflage attack changes the appearance $Y_{t-0.5}$ to some other value, for example, $Y_{t-0.5}=g(X_{t-0.5})$ where $g(\cdot)$ is a camouflage function. Then the observation of $X_{t-0.5}$ at time index $(t-0.5)$ from the perspective of agent $i$ is given by $s_{d,t-0.5,i}=h_i(g(X_{t-0.5}))=h_i(Y_{t-0.5})$, where $Y_{t-0.5}$ is the changed appearance of the object, and $h_i$ is the observation function of recipient agent $i$ ($h_i$ can be a function giving random outcomes, for example, due to noises). 

These camouflaged objects will fool the recipient agents into delusional state $s_{d,t-0.5,i}$ for each agent $i$. Different from the state perception attacks, these delusional states  $s_{d,t-0.5,i}$ have to be correlated or even exactly the same across different recipient agents: this is because camouflage attacks make the recipient agents observe the same camouflaged objects. In state perception attacks \cite{spa1,spa2}, the attackers can instead fool different recipients into very different delusional states.  

\section{Cost constrained camouflage attacks}
\label{sssec:al-time-cc}

In practice, sometimes the attackers have attack budgets that must be spent by the end of each time step. For example, the resources used by attackers are provided by constantly-energy-harvesting systems over time steps, and the budget for each time step is constrained by the battery volume.  We call this scenario an ``instant cost constrained case''. Within each time step $t$ (namely between time index $t-1$ and $t$), 
all attackers share a budget $B$, and this budget $B$ can only be spent during that single time step: the leftover resources cannot be carried over to the next time step $t+1$ or there is no need to carry over the leftover resources to the next time step because of budget refill. Once we get to time step $t+1$, the shared budget $B$ will be refilled (say, to $B$). We would like to find out how to optimally allocate the total resources to each attacker $j$ for performing the camouflage attack in each time step, while satisfying the instant cost constraint and minimizing the recipient agents' total rewards. To simplify our presentations, we consider the budgets are used to camouflage the attackers themselves.

We design an integration of between-step dynamic programming and within-step static constrained optimizations to compute the optimal attack strategy. During each time step $t$, for each possible actual state vector $\textbf{s}_{a,t-1}$, we use a static standalone optimization program to determine the optimal allocation of attackers' budgets on camouflages.  Between different time steps, we use dynamic programming to account for the state transitions and expected rewards.

We work backward from time index $t=T$ and initialize value function $V_{T}^{\star}(\sigma_T)=0$ for each dynamic programming state (DPS) $\sigma_{T}$ (a DPS state includes all recipient and attacker agents' actual conditions, and also the conditions of camouflaged objects), and the subscript represents time index. Suppose that we have already computed $V_{t+0.5}^{\star}(\sigma_{t+0.5})$ for every DPS state $\sigma_{t+0.5}$.  We then proceed to compute the optimal attack policy during time step $t$ (essentially from $t$ to $t+0.5$) and also $V_{t}^{\star}(\sigma_t)$ for every DPS state $\sigma_t$. During time step $t$, we let $b_{j} \in \mathbb{R}$ be the amount of resources attacker $j$ spends on its camouflage attack. The constraints on $b_j$ are such that the total spending of all attackers cannot exceed $B$. To represent formulas concisely, we stack the $b_{j}$'s to form a $m$-dimensional vector $\mathbf{b}$ called the attack allocation vector. Under the attack allocation vector $\mathbf{b}$, we denote the probability that the recipient agents' state will transit to $\sigma_{t+0.5}$ as $P(\mathbf{b},\mathbf{s}_{a,t}, \sigma_{t+0.5})$, where $\mathbf{s}_{a,t}$ is the true states of all the recipient agents at time index $t$. 
This probability must be between 0 and 1. Based on the principles of dynamic programming, we want to optimize $b_{j}$'s to minimize the total expected rewards received by the agents from time step $t$ to $T$. Thus, under a specific true state vector $\mathbf{s}_{a,t}$, the objective function for attackers to minimize is the expected total reward of all the recipient agents from step $t$ onward to step $T$. 

\subsubsection{Within-step static constrained optimization problem}
\label{sec:staticoptimization}
Suppose that the DPS has $Q$ possible values at time index $t+0.5$ conditioned on the true states are $\mathbf{s}_{a,t}$, the optimal attack under the ``instant cost constrained'' case at a single time step $t$ can be formulated as the following within-step static constrained optimization problem:
\begin{align}
\label{static}
\min \sum_{k=1}^{Q} & P(\mathbf{b},\mathbf{s}_{a,t}, \sigma_{t+0.5}^k) V_{t+0.5}^{\star}(\sigma_{t+0.5}^k)\\
\text{subject to} &\sum_{j=1}^{m} b_{j} \leq B, \nonumber\\
& P(\mathbf{b},\mathbf{s}_{a,t}, \sigma_{t+0.5}^k) \leq 1, \ \forall k  \nonumber \\
& -P(\mathbf{b},\mathbf{s}_{a,t}, \sigma_{t+0.5}^k) \leq 0, \  \forall k \nonumber\\
& b_{j} \geq 0, \  j = 1,\dots,m, \nonumber
\end{align}
where $\sigma_{t+0.5}^k$ is the $k$-th DPS at time index $t+0.5$.

Depending on the physical nature of the attacks, we can model the probability $P(\mathbf{b},\mathbf{s}_{a,t}, \sigma_{t+0.5}^k)$ as a function of $\mathbf{b}$. In one particular model considered in the paper, for each attacker $j$, the probability that it can change the appearance of the object it controls is $\max\{b_j/C_t(x_j,y_j),1\}$, where $C_t(x_j, y_j)$ are constants representing how hard it is for the attacker $j$ to camouflage the appearance of $x_j$ as $y_j$. In our numerical results, we take $C_t(x_j,y_j) = d(s^{\dagger}_{a,j}, s^{\dagger}_{d,j}) + \epsilon$ where $\epsilon$ is a positive constant and $d(s^{\dagger}_{a,j}, s^{\dagger}_{d,j})$ is the distance between the real position of the attacker $j$ and the target camouflaged position the attacker $j$ chooses. Namely, if we assign more budget to attacker $j$, and if the target camouflage position is closer to its actual position, it is more likely that attacker $j$ can change the objects to the targeted appearances.

\subsubsection{Between-step dynamic programming}
After solving \eqref{static}, we take the optimal value of its objective function as $V_t^{\star}(\sigma)$, using which we continue to calculate $V_{t-0.5}^{\star}(\cdot)$ as follows. For each DPS $\sigma_{t-0.5}$, we update $V_{t-0.5}^{\star}(\sigma_{t-0.5})$ as 
$$\small \sum_{\sigma_t} P(\sigma_{t} | \sigma_{t-0.5}, \textbf{a}^{\star}_{t-0.5}) ( V_t^{\star}(\sigma_t)  + R(\sigma_{t-0.5},\sigma_t)).$$
After updating $V_{t-0.5}^{\star}(\cdot)$, again we will use another static optimization formulation in Section \ref{sec:staticoptimization} to calculate $V_{t-1}^{\star}(\cdot)$. In this way, we perform this static optimization-dynamic programming cycle recursively until we calculate all the $V_{t}^{\star}(\cdot)$ backward from $t=T$ until $t=0$. 

\section{Performance analysis of camouflage attacks}
Camouflage attack is arguably a more practical form of adversarial attacks since it only requires the attackers change the appearances of the objects the attackers directly control. So different victims will have correlated or the same observations of these camouflaged objects.  In contrast, the optimal state perception attacks would require the attackers to change the observations of different victims to possibly different delusions. The following analytical results bound the gaps between the camouflage attacks and the state perception attacks. In this section, we assume that different victims have the same observations of the camouflaged objects and we do not impose cost constraints on the attacks.   We start with a lemma about imposing an equality constraint on the optimization variables.  

\begin{lemma}\label{lemma1}
    Consider $n$ functions $\{f_i\}_{i=1}^n$, where $i=1,2,~...,~n$,  and the following two optimization problems:
    \begin{align}\label{cam}
        \min_{x_1,x_2,\dots ,x_n} &\sum_{i=1}^n f_i(x_i)\\
        \text{subject to } &x_1=x_2=\dots=x_n;\nonumber
    \end{align}
    and 
    \begin{align}\label{spa}
        \min_{x_1,x_2,\dots ,x_n} \sum_{i=1}^n f_i(x_i).
    \end{align}
    Let $x^{**}$ be the optimal solution of (\ref{cam}) and $o_1$ be the optimal objective value of (\ref{cam}). Let $(x_1^*, x_2^*, \dots, x_n^*)$ be the optimal solution of (\ref{spa}) and $o_2$ be the optimal objective value of (\ref{spa}). Assume that there exist constants $C_j$'s, $j=1,2,~...,~n$, such that for every $j$, 
\begin{align*}
    \sum_{i=1,i\neq j}^n (f_i(x_j^*) - f_i(x_i^*)) \leq C_j.
\end{align*}
Then we have $ o_2 \leq o_1 \leq o_2 + \min_{j}{\{C_j\}}.$
\end{lemma}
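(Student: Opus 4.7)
The plan is to prove the two inequalities of the sandwich $o_2 \leq o_1 \leq o_2 + \min_j\{C_j\}$ separately, both by elementary feasibility arguments; no convexity, smoothness, or regularity assumptions on the $f_i$'s are needed.

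For the left inequality $o_2 \leq o_1$, the observation is that problem (\ref{cam}) is simply problem (\ref{spa}) augmented with the linear equality constraints $x_1 = x_2 = \cdots = x_n$. Hence every feasible point of (\ref{cam}) is feasible for (\ref{spa}), and any minimizer of (\ref{cam}) provides an upper bound on the minimum of (\ref{spa}). This gives $o_2 \leq o_1$ with no work.

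For the right inequality $o_1 \leq o_2 + \min_j\{C_j\}$, I would fix an arbitrary index $j \in \{1,\dots,n\}$ and construct an explicit feasible point of (\ref{cam}) by setting $x_1 = x_2 = \cdots = x_n = x_j^*$, where $x_j^*$ is the $j$-th coordinate of the optimal solution of (\ref{spa}). This point automatically satisfies the equality constraint of (\ref{cam}), so its objective value upper-bounds $o_1$. Evaluating this objective and adding and subtracting the quantity $\sum_{i\neq j} f_i(x_i^*)$ yields
\begin{align*}
\sum_{i=1}^n f_i(x_j^*) &= f_j(x_j^*) + \sum_{i\neq j} f_i(x_i^*) + \sum_{i \neq j}\bigl(f_i(x_j^*) - f_i(x_i^*)\bigr) \\
&= o_2 + \sum_{i\neq j}\bigl(f_i(x_j^*) - f_i(x_i^*)\bigr) \\
&\leq o_2 + C_j,
\end{align*}
where the last step uses the hypothesis of the lemma. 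Since this holds for every $j$, taking the minimum over $j$ gives $o_1 \leq o_2 + \min_j\{C_j\}$.

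There is no real obstacle here: the proof is essentially a bookkeeping argument, and the only nontrivial step is recognizing the right feasible point to plug into (\ref{cam}), namely the $n$-tuple $(x_j^*,\dots,x_j^*)$ associated with whichever index $j$ realizes the minimum bound. The hypothesis on the $C_j$'s is precisely what converts the resulting telescoping expression into the desired additive gap. I would close with a remark that when the $f_i$'s are sufficiently regular (e.g., Lipschitz) and the optimal coordinates $x_i^*$ are close to one another, the $C_j$'s can be made small, which is exactly the regime in which a camouflage attack approximates a state perception attack.
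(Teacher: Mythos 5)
Your proposal is correct and follows essentially the same argument as the paper: the left inequality comes from (\ref{cam}) being a constrained version of (\ref{spa}), and the right inequality comes from using the constant point $(x_j^*,\dots,x_j^*)$ as a feasible point of (\ref{cam}) and invoking the hypothesis on $C_j$. The algebraic bookkeeping is identical to the paper's, so no further comparison is needed.
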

\begin{proof}: Because (\ref{cam}) has one additional constraint,  $o_2 \leq o_1$. Given an arbitrary index $j$, $j=1\dots n$, we have:
\begin{align*}
   o_1 &= \sum_{i=1}^n f_i(x^{**}) \leq \sum_{i=1}^nf_i(x_j^*).
\end{align*}
So for every $j$, we have 
\begin{align*}
    o_1-o_2&\leq \sum_{i=1}^nf_i(x_j^*) - \sum_{i=1}^nf_i(x_i^*)\\&=\sum_{i=1,i\neq j}^n (f_i(x_j^*) - f_i(x_i^*))\leq C_j.
\end{align*}
Therefore $o_2\leq o_1\leq o_2 + \min_{j}\{C_j\}$.
\end{proof}
\begin{theorem}\label{thm1}
Consider $m$ attacker and $n$ recipient agents, for one single time step $t$ (from time index $t-1$ to time index $t$).  Assume the recipients share the same state space $\mathcal{S}$, the same action space $\mathcal{A}$, the same probability transition matrices $P: \mathcal{S}\times \mathcal{A}\times \mathcal{S} \rightarrow [0,1]$, and the same reward function $R:\mathcal{S} \times \mathcal{S} \rightarrow \mathbb{R}$. Let the observation functions $h_i$ be identical for every agent $i$, so that the camouflaged observations are the same for every recipient agent $i$, i.e. $s_{d,t-0.5,i}$'s are equal. We assume that the optimal policy for each recipient agent is the same and the recipients work independently from each other. We use $\pi^*_{t}:\mathcal{S}\rightarrow \mathcal{A}$ to denote the shared optimal policy of a recipient agent at time step $t$. Within time step $t$, let the total rewards of all recipients gained under the optimal camouflaged attack be ${TR}^{ca}_t$ and the total reward gained under the optimal state-perception attack be ${TR}^{spa}_t$.

Assume that for every pair of two different recipient agents $(i,j)$, $i,j=1\dots n$, for every pair of actual states $(s_{a,t-1,i}, s_{a,t-1,j})$ of recipient $i$ and recipient $j$, the rewards gained for agent $i$ under delusional state perceptions at time step $t$ satisfy
\begin{equation*}
   \resizebox{0.97\hsize}{!}{$ER(s_{a,t-1,i}, \pi^*_{t}(s^*_{d,t-0.5,j})) - ER(s_{a,t-1,i}, \pi^*_{t}(s^*_{d,t-0.5,i}))\leq C_{ij},$} 
\end{equation*}
for some small constant $C_{ij}$, where $s^*_{d,t-0.5,j}$ and $s^*_{d,t-0.5,j}$ are the most-damaging delusional state perceptions that minimize the reward for agent $j$ and $i$ respectively, and $ER(s_{a,t-1,i}, \pi^*_{t}(s_{d,t-0.5,\cdot}))$ is the expected reward recipient agent $i$ will get using the policy corresponding to a delusional state perception $s_{d,t-0.5, \cdot}$.  Then  
$${TR}^{spa}_t\leq {TR}^{ca}_t\leq {TR}^{spa}_t + \min_j \sum_{i=1,i\neq j}^n\{C_{ij}\}.$$
\end{theorem}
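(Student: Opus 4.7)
The plan is to reduce the theorem directly to Lemma \ref{lemma1} by identifying the two attack scenarios with its two optimization problems. First I would define, for each recipient $i$, the single-variable function $f_i(x) = ER(s_{a,t-1,i},\pi^*_t(x))$, where $x$ ranges over possible delusional state perceptions. Because the recipients act independently within time step $t$ and share the same optimal policy, the total expected reward over all $n$ recipients decomposes additively as $\sum_{i=1}^n f_i(x_i)$, where $x_i$ is the delusional state ultimately perceived by recipient $i$ after the attack.

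Next I would argue that the attacker's optimization in each case matches one of the problems in Lemma \ref{lemma1}. Under the optimal state-perception attack (no cost constraint), the attacker is free to fool each recipient $i$ into an arbitrary delusional state $x_i$, so the minimization is exactly (\ref{spa}); its optimal value is ${TR}^{spa}_t = o_2$, attained at $(x_1^*,\dots,x_n^*) = (s^*_{d,t-0.5,1},\dots,s^*_{d,t-0.5,n})$. Under the optimal camouflage attack, because the observation functions $h_i$ are identical and every recipient sees the same camouflaged appearance $Y_{t-0.5}$, all $s_{d,t-0.5,i}$ must coincide; the attacker can still freely choose this common value by choosing $Y_{t-0.5}$. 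Hence the camouflage problem is exactly (\ref{cam}) with the equality constraint $x_1 = \cdots = x_n$, and its optimal value is ${TR}^{ca}_t = o_1$.

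Finally I would apply Lemma \ref{lemma1} with $C_j := \sum_{i=1,i\neq j}^n C_{ij}$. The hypothesis of the theorem says that for every pair $(i,j)$, $f_i(x_j^*) - f_i(x_i^*) = ER(s_{a,t-1,i},\pi^*_t(s^*_{d,t-0.5,j})) - ER(s_{a,t-1,i},\pi^*_t(s^*_{d,t-0.5,i})) \leq C_{ij}$, so summing over $i\neq j$ gives exactly the bound $\sum_{i\neq j}(f_i(x_j^*)-f_i(x_i^*)) \leq C_j$ required by Lemma \ref{lemma1}. The lemma then yields $o_2 \leq o_1 \leq o_2 + \min_j C_j$, which is precisely ${TR}^{spa}_t \leq {TR}^{ca}_t \leq {TR}^{spa}_t + \min_j \sum_{i=1,i\neq j}^n C_{ij}$.

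The main obstacle, and the only non-routine step, is justifying the reduction itself, specifically the claim that the camouflage attacker's feasible set is exactly $\{(x,\dots,x) : x \in \mathcal{S}\}$ while the state-perception attacker's feasible set is all of $\mathcal{S}^n$. This relies on three ingredients the theorem supplies: identical $h_i$'s, identical policies, and independence of the recipients; combined with the absence of a cost constraint, these let the attacker realize any target common delusional state in the camouflage case and any tuple of delusional states in the SPA case. Once this identification is made, everything else is a direct invocation of Lemma \ref{lemma1}.
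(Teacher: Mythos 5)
Your proposal is correct and follows essentially the same route as the paper: both identify $f_i(x_i) = ER(s_{a,t-1,i}, \pi^*_t(s_{d,t-0.5,i}))$, cast the camouflage attack as the equality-constrained problem (\ref{cam}) and the state-perception attack as the unconstrained problem (\ref{spa}), and then invoke Lemma \ref{lemma1} with $C_j = \sum_{i\neq j} C_{ij}$. Your added care in justifying that the two feasible sets are exactly $\{(x,\dots,x)\}$ and $\mathcal{S}^n$ is a welcome elaboration of a step the paper leaves implicit, but it is not a different argument.
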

\begin{proof}: We use Lemma \ref{lemma1} to prove Theorem \ref{thm1}. For each recipient agent $i$, we let the function $\{f_i\}_{i=1}^n$ be the expected reward recipient agent $i$ gets under its true states $s_{a,t-1,i}$  and agent $i$'s delusional observation $s_{d,t-0.5,i}$.   

In this setting, the variable $x_i$ in the optimization problems (\ref{cam}) and (\ref{spa}) is the delusional observation $s_{d,t-0.5,i}$ of the $i$-th recipient. In (\ref{cam}) which corresponds to the camouflage attack, $f_i(x_i) =ER(s_{a,t-1,i}, \pi^*_t(s_{d,t-0.5,i}))$ and we require $s_{d,t-0.5,i}$ to be equal across different agents $i$'s. In  (\ref{spa}) which corresponds to a ``free'' state perception attack, $f_i(x_i) = ER(s_{a,t-1,i}, \pi^*_t(s_{d,t-0.5,i}))$,  but we will not require  $s_{d,t-0.5,i}$ to be the same across different agents $i$'s. Because \begin{equation*}
    \resizebox{0.97\hsize}{!}{$ER((s_{a,t-1,i}, \pi^*_{t}(s^*_{d,t-0.5,j})) - ER(s_{a,t-1,i}, \pi^*_{t}(s^*_{d,t-0.5,i})) \leq C_{ij},$}
\end{equation*}
by applying Lemma \ref{lemma1}, we have 
${TR}^{spa}_t\leq {TR}^{ca}_t\leq {TR}^{spa}_t + \min_j \sum_{i=1,i\neq j}^n\{C_{ij}\}$.
\end{proof}

\section{Numerical Results}
We perform numerical results under various game settings with $T=5$. In each setting, we compared the total rewards of all recipients without attack, with the camouflage attack, and with the state perception attack. Results indicate that recipients gain significantly smaller rewards under the camouflage attacks compared to the case with no attacks. The reward gained under the state perception attacks is smaller, but not significantly, than the more practical camouflage attack. For cost-constrained camouflage attacks, as the attack budget increases, the gained reward becomes smaller. Our framework works for general $m$-attacker-$n$-recipient scenarios.
\subsection{Camouflage orientations  }
In the first experiment, there are 2 recipients and 2 attackers playing in the MG. The recipients share the same state space $\mathcal{S}$, which is a ring containing 3 different states: 0, 1, and 2. They also share the same action space $\mathcal{A}$, the probability transition $P$, and the same reward function $R$. The action space $\mathcal{A}$ is composed of three actions: go left, go right, and stay. For actions \textbf{left} and \textbf{right}, the recipient agent has a 0.8 probability of moving in the intended direction and a 0.2 probability of moving in the opposite direction. For \textbf{stay}, the recipient agent has a 0.8 probability of staying at the current state and a 0.1 possibility of moving to the right or left. The reward function $R(s_{t-1}, s_{t})$ assigns a fixed positive reward to the recipient agents, which is displayed in the table below:\\
\begin{equation*}
    \resizebox{0.39\hsize}{!}{$
  \begin{array}{c|ccccc}
    \indices{\text{t}}{\text{t-1}}
    & s_0 & s_1 & s_2\\
    \hline
    s_0 & 3.0 & 10.6 & 1.0 \\
    s_1 & 10.0 & 1.0& 0.0\\
    s_2 & 1.0 & 0.0 & 11.6
  \end{array}
$}
\end{equation*}

The attacks camouflage the orientation of the ring by rotating it counter-clockwisely for respectively 1 step, 2 steps, and 3 steps. The camouflaged orientation after a 3-step rotation is the same as the true orientation. For every attack, recipients' perceptions of their real positions are based on the camouflaged ring, as described in Figure 1. In Figure 2, we compared the expected global rewards of recipients for time index from 0 to 5, under camouflage attacks, state perception attacks, and without attacks.
\begin{figure}[htb]
\begin{minipage}[b]{0.48\linewidth}
  \centering
  \centerline{\includegraphics[width=4.5cm, height=4cm]{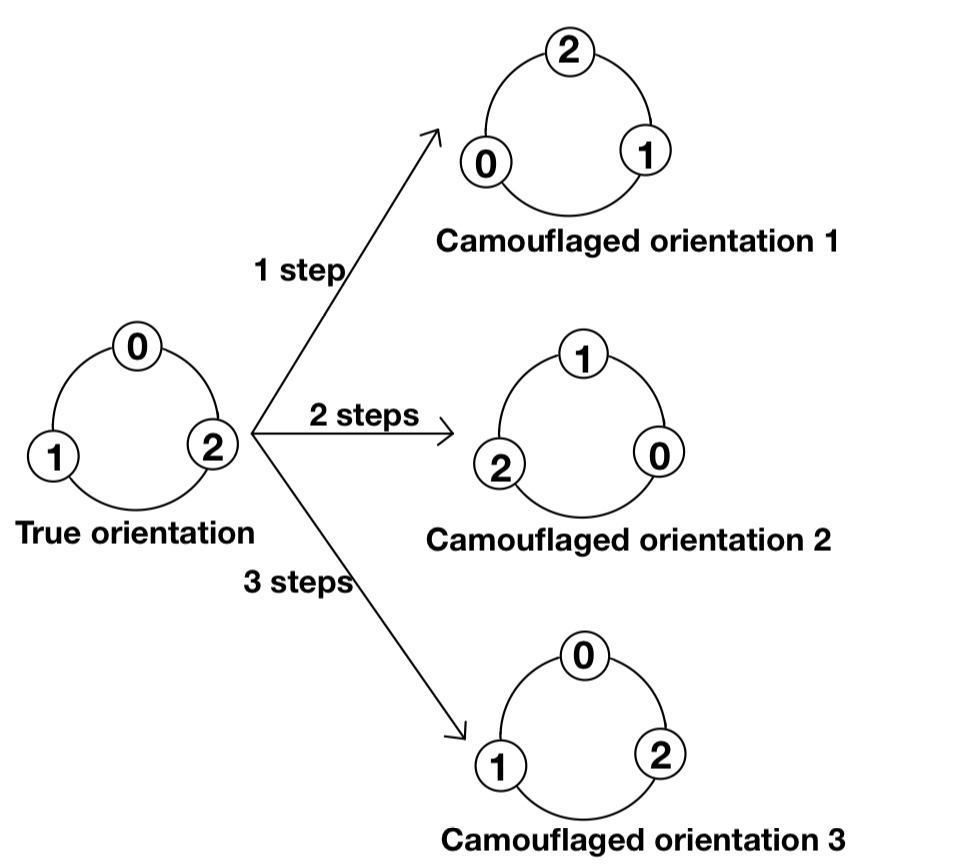}}
  \caption{Illustration: camouflage attacks on a ring.}
\end{minipage}
\hfill
\begin{minipage}[b]{0.48\linewidth}
  \centering
  \centerline{\includegraphics[width=4.5cm, height=4cm]{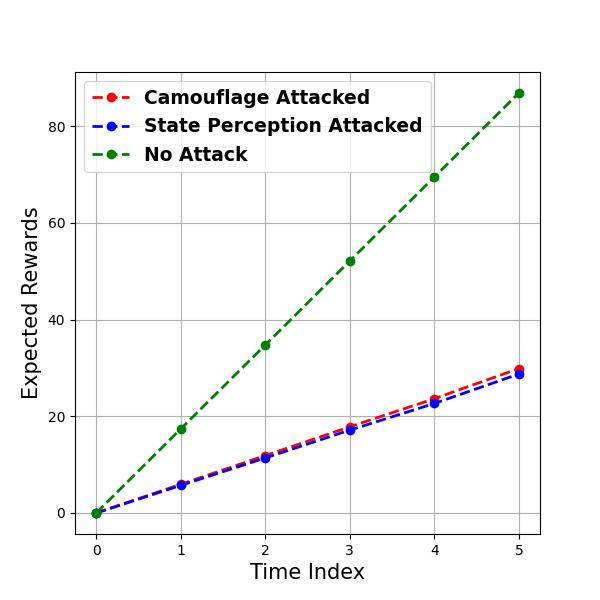}}
  \caption{Ring topology. Comparison of expected global rewards between free state perception attacks and camouflage attacks.}
\end{minipage}
\end{figure}
In this figure, the x-axis is the time index, and the y-axis is the total expected rewards the recipient agents gained from $t=0$ to the current time index. With camouflage attacks, the reward gained is 34.4$\%$ of that achieved without attack. With state perception attacks (where attackers can freely fool each recipient into desired delusional states), the reward is about 33.1$\%$ of that without attack.

\subsection{Camouflage attackers' real positions}
\label{sec:chessboard}
In this case, the recipients and the attackers move on a square $q \times q$ chessboard. The position of either a recipient or an attacker can be denoted as $(i,j)$, where $i$ ($0\leq i \leq q-1$) records the row index, and $j$ ($0\leq j \leq q-1$) records the column index on the chessboard. We made 2 experiments with a square chessboard: the first one has $q=3$, and the second one has $q=2$. For each attacker, its position is fixed and it can only attack if any recipient moves to its location. Neither attackers nor recipients can move beyond the boundaries of the chessboard. 

The recipients have the same state space $\mathcal{S}$, the same action space $\mathcal{A}$, and the same reward function $R$. Recipients can move up, down, right, or left. The attackers' positions cannot move during the game. For simplicity, the probability of recipients moving along the indicated direction of an action $a\in \mathcal{A}$ is set to be 1. The reward ${R}(s_{t-1}, s_{t})$ for entering all possible chessboard positions is set to be $5.0$ except for $(0,1)$, whose reward is $10.0$. If any recipient enters any of the attacker's positions, the reward is set to $1.0$. For example, with $q=3$, if the fixed positions of the attackers are at $(1,1)$ and $(2,1)$, then the reward function can be displayed in the following table. The $i$, $j$ here means the indices of the square the recipient is entering:\\
\begin{equation*}
     \resizebox{0.34\hsize}{!}{$
  \begin{array}{c|ccccc}
    \indices{\text{j}}{\text{i}}
    & 0 & 1 & 2\\
    \hline
    0 & 5.0 & 10.0 & 5.0 \\
    1 & 5.0 & 1.0& 5.0\\
    2 & 5.0 & 1.0 & 5.0\\
  \end{array}
$}
\end{equation*}
In the first experiment, 3 recipients and 2 attackers played on a 3$\times$3 chessboard, and in the second experiment, 2 recipients and 1 attacker played on a 2$\times$2 chessboard.  In Figures 3 and 4, we compared the expected global rewards of recipients for time index from 0 to 5, under camouflage attacks, state perception attacks, and without attacks.
\begin{figure}[htb]
\begin{minipage}[b]{0.48\linewidth}
  \centering
  \centerline{\includegraphics[width=4.5cm, height=3.9cm]{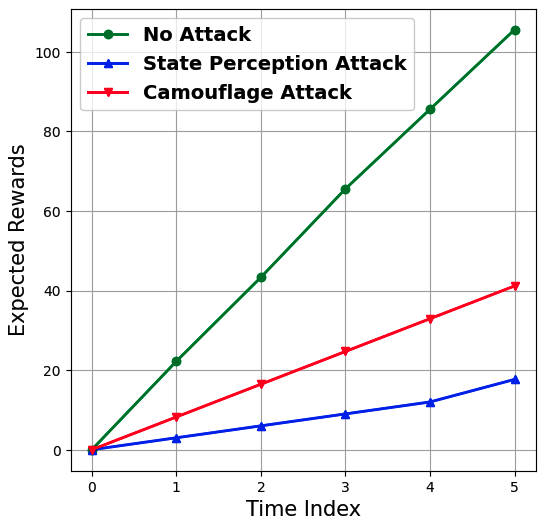}}
  \caption{$3\times 3$ chessboard. Comparison between state perception attack and camouflage attack, with fixed attackers at (1,1) and (2,1), 3 recipients and 2 attackers.}
\end{minipage}
\hfill
\begin{minipage}[b]{0.48\linewidth}
  \centering
  \centerline{\includegraphics[width=4.5cm, height=3.9cm]{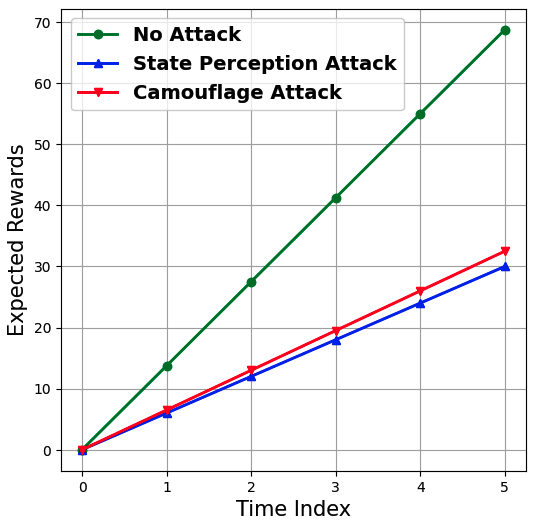}}
  \caption{$2 \times 2$ chessboard. Comparison between state perception attack and camouflage attack, 2 recipients and 1 attacker.}
\end{minipage}
\label{fig:res}
\end{figure}

In Figure 3, with the fixed attackers at (1,1) and (2,1) on the chessboard, the global reward gain achieved under camouflage attacks is 39.0$\%$ of the reward achieved without attack. The reward gain under state perception attacks is roughly 16.7$\%$. In Figure 4, the expected global gained reward over all real attackers' positions after 5 camouflage attacks is 47.3$\%$ of the case without attack, and the expected gained reward after 5 state perception attacks is 43.6$\%$ of the case without attack.

\subsection{Cost constrained camouflage attacks}
With 3 recipients and 2 attackers in the same $3 \times 3$ chessboard setting as \ref{sec:chessboard}, we add the cost constraint to attackers at every time step when they perform a camouflage attack. The cost of every attacker is the distance between its real position and the target camouflage position. We define the distance between two positions as the sum of their row and column index absolute differences. Within every time step $t$, the shared budget is refilled to a fixed budget. We choose the following sequence of fixed budgets $\{1,2,3,4,6,12\}$ for tests. In Figure 5, we compare the expected global reward gain under different budgets. It turns out that the higher the budget, the fewer the reward gains. When the budget reaches 6, the performance of the cost constrained camouflage attack is the same as the optimal camouflage attack.
\begin{figure}[h]
\centering
  \centerline{\includegraphics[width=7cm,height=4.4cm]{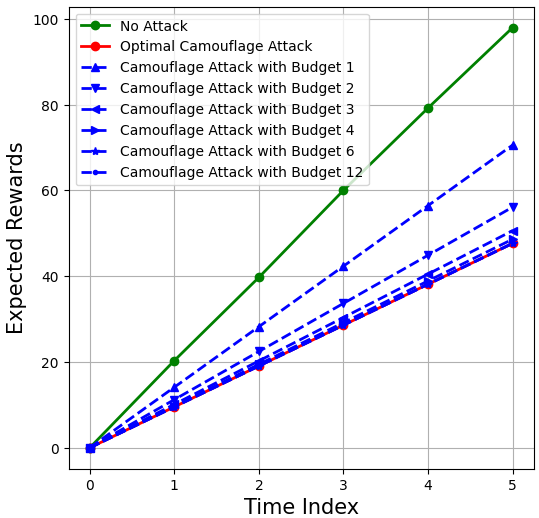}}
  \caption{Comparison between camouflage attacks with different budgets.}

\end{figure}






\bibliographystyle{IEEEbib}
\bibliography{refs}

\end{document}